\newcommand{\mathbbm}[1]{\text{\usefont{U}{bbm}{m}{n}#1}}
\newtheorem{theorem}{Theorem}
\newtheorem{lemma}{Lemma}
\newcommand{\id}{\ensuremath{\mathbbm{1}}} 
\newcommand{\one}{\id}
\newcommand{\bra}[1]{\langle #1|}
\newcommand{\ket}[1]{|#1\rangle}
\newcommand{\braket}[2]{\langle #1|#2\rangle}
\newcommand{\C}{\ensuremath{\mathbbm C}}
\newcommand{\be}{\begin{equation}}
\newcommand{\ee}{\end{equation}}
\newcommand{\bea}{\begin{eqnarray}}
\newcommand{\eea}{\end{eqnarray}}
\newcommand{\bi}{\begin{itemize}}
\newcommand{\ei}{\end{itemize}}
\newcommand{\kommentar}[1]{}
\newcommand{\identity}{\mathbbm{1}}
\newcommand{\forget}[1]{}
\begin{document}

\title{All pure fermionic non--Gaussian states are magic states for matchgate computations}


\author{M. Hebenstreit$^1$, R. Jozsa$^2$, B. Kraus$^1$, S. Strelchuk$^2$  and M. Yoganathan$^2$}
\affiliation{$^1$Institute for Theoretical Physics, University of Innsbruck, Technikerstr. 21A, 6020 Innsbruck, Austria\\
$^2$DAMTP, University of Cambridge, Cambridge CB3 0WA, UK}

\begin{abstract}
Magic states were introduced in the context of Clifford circuits as a resource that elevates classically simulatable computations to quantum universal capability, while maintaining the same gate set. Here we study magic states in the context of  matchgate (MG) circuits, where the notion becomes more subtle, as MGs are subject to locality constraints. Nevertheless a similar picture of gate-gadget constructions applies, and we show that every pure fermionic state which is non--Gaussian, i.e. which cannot be generated by MGs from a computational basis state, is a magic state for MG computations. This result has significance for prospective quantum computing implementation in view of the fact that MG circuit evolutions coincide with the quantum physical evolution of non-interacting fermions.
\end{abstract}

\maketitle

{\it Introduction}:
Exploring the landscape intermediate between classical and quantum computing is one of the most interesting issues in quantum information science for both theory and potential implementational impact. It provides the natural context
for the consideration of  novel trade-off possibilities between the individual constituents of the respective theories, that may then provide new applications for emerging near-term quantum hardware of likely limited quantum capability. One fruitful approach in this direction is to determine the classical simulation complexity of a restricted class of quantum processes (that may perhaps enjoy some implementational benefit), and then identify minimal extra resources that would suffice to regain full universal quantum computing power.

The theory of fermionic linear optics underpins a large number of important physical systems, such as Gaussian communication channels~\cite{Br05Lagrange} and important phenomena in condensed matter physics~\cite{zha12, bot04}, including Majorana fermions in quantum wires~\cite{kit01} and Kitaev’s honeycomb lattice model~\cite{kit06}. Together with the development of experimental systems such as cold atoms~\cite{gio08}, atoms in optical lattices~\cite{jor08}, and quantum dots~\cite{los98,han07} it is well-suited for a range of information processing tasks.

Early on, it was shown that the computational capabilities of unassisted fermionic linear optics can be described by matchgate circuits (MGCs) and they are entirely classically efficiently simulatable~\cite{TeDi02, Knill, JoMi08, JoMi15}, the latter holding also for some extensions of fermionic linear optics with dissipative processes~\cite{kb11}. 
Another class of quantum processes that is classically simulatable is given by Clifford circuits, albeit for different reasons in comparison to MGCs ~\cite{joz08} and indeed  the classically simulatable computational power of MGCs has the neat characterisation of being equivalent to log-space bounded universal unitary quantum computation~\cite{JoKr10}.

Determining extra ingredients for Clifford circuits \cite{BrKi05,BrSm16} or for MGCs that suffice to regain universal computational power, can give avenues for testing and boosting the power of corresponding near-term quantum computing devices that are based on implementing such gates. In the case of Clifford circuits a fundamentally important such ingredient is the provision of a so-called magic state \cite{BrKi05} i.e. a suitably chosen additional input state whose availability together with adaptive measurements gives universal computing power while still using only the same gate set, via introduction of an associated ``gate-gadget" construction.

In this Letter we establish and study the notion of magic states for MGCs.  We will see that this notion becomes more subtle in the matchgate context, as matchgate actions are subject to a locality constraint, and also the SWAP gate is not available to freely move magic states into arbitrary positions amongst the qubits. 
Nevertheless a similar picture of gate-gadget constructions applies, and our main result will be to show that {\em every}  pure fermionic state which is non--Gaussian, i.e. which cannot be generated by a MGC from a computational basis state, is a magic state for MGCs. Along the way we will see that the matchgate locality constraint imposes a necessary condition that magic states be fermionic states; and we will give an explicit example of a gate-gadget construction (with associated 4-qubit magic state) for implementing the SWAP gate, which is known to extend the power of MGCs to full quantum universal power \cite{JoMi08}.

{\it Preliminaries}: The Pauli operators are denoted by $X,Y,Z$. For $n$ qubits, the even (resp. odd) parity subspace, is the eigenspace of the $n$-qubit operation $Z^{\otimes n}$ with eigenvalues $1$ ($-1$), respectively. An operator is called {\em even} if it preserves the even and odd parity subspaces.
A matchgate (MG) is a two--qubit unitary even operator $G(A,B)=A\oplus B$ where $A, B \in U(2)$ act on the even and odd parity subspaces respectively and satisfy $\operatorname{det}A = \operatorname{det}B$. Whereas the fermionic SWAP gate, $fSWAP=G(Z,X)$, is a MG, the $SWAP$ gate, $SWAP=G(\one,X)$ is not. A matchgate circuit (MGC) is a quantum circuit which comprises MGs acting only on {\em nearest neighbor} (n.n.) lines. Correspondingly, in the
following the term MG will always refer to a nearest neighbor matchgate. The action of any MGC is always an even operator.
It has been shown \cite{Va02,TeDi02, JoMi08} that for any MGC, if the input is a computational basis state and the output is a final computational basis measurement on any single qubit, then the output is classically efficiently simulatable. Moreover, in \cite{JoKr10} it has been shown that MGCs running on $n$ qubits can be compressed into a universal quantum computer running on ${\cal O}(\log(n))$ qubits. However, supplementing MGCs with additional gates, such as the SWAP-gate, makes the circuits universal for quantum computing \cite{JoMi08, BrGa11, BrGa12,BrCh14}. This is analogous to the situation of classically simulatable Clifford computations being elevated to universal quantum computing power by the inclusion of a non-Clifford gate such as the $T$ gate.

MGCs have a profound physical significance in that they correspond exactly to quantum evolutions of so-called non-interacting fermions \cite{TeDi02,Knill}. This correspondence is given explicitly by the Jordan-Wigner (JW) transformation as we now summarise. Introducing an annihilation and a creation operator $b_k,b_k^\dagger$ for each fermionic mode $k=1,\dots,n$ and writing $\tilde{c}_{2k-1}=b_k+b_k^\dagger, \tilde{c}_{2k}=-i(b_k-b_k^\dagger)$, the canonical anti-commutation relations can be algebraically represented by the $n$-qubit JW operators
 \begin{align}
 \label{eq:JWT1} c_{2j-1}&=Z\otimes Z\otimes \ldots \otimes Z \otimes  X_j \otimes \one \otimes \ldots \otimes \identity \\ \nonumber
c_{2j}&=Z\otimes Z\otimes \ldots \otimes Z \otimes Y_j \otimes \one \otimes \ldots \otimes \identity 
\end{align}
and then the JW transformation is the unitary mapping between the Fock space
 of $n$ fermionic modes and the Hilbert space of $n$ qubits whereby a general fermionic state
 \begin{align} \label{eq_FermiState}
\ket{\Psi}=\sum_{i_1\ldots,i_n\in{0,1}} \alpha_{i_1\ldots,i_n}
(b_1^\dagger )^{i_1}(b_2^\dagger )^{i_2}\ldots (b_n^\dagger
)^{i_n}\ket{\Omega},
\end{align}
 (where $\ket{\Omega}$ denotes the vacuum state and
$\alpha_{i_1\ldots,i_n}\in \C$) is mapped to the $n$-qubit state
\begin{align} \label{eq_FermiState2} \ket{\Psi}=\sum_{i_1\ldots,i_n} \alpha_{i_1\ldots,i_n} \ket{i_1\ldots i_n}.
\end{align}
Correspondingly (in view of the boson-fermion superselection rule) an $n$ qubit state $\ket{\Psi}$ is called {\em fermionic} if it is an eigenstate of $Z^{\otimes n}$.

We will also use the following terminology frequently. An $n$-qubit operation $W$ is called {\em Gaussian} if it arises as the action of a MGC (which in turn always corresponds via the JW transformation, to evolution under a quadratic fermionic Hamiltonian). An $n$-qubit state $\ket{\Psi}$ is called {\em Gaussian} if it arises as the action of a Gaussian operation on a computational basis state. In \cite{Br05Lagrange,SpSc18} it has been shown that a fermionic state is a Gaussian state iff $\Lambda_n \ket{\Psi}^{\otimes 2}=0$, where $\Lambda_n=\sum_{i=1}^{2n} c_i \otimes c_i$. Furthermore, any even operator $R$ is Gaussian iff $[\Lambda_n,R^{\otimes 2}]=0$ \cite{Br05Lagrange}. 

As MGs can be freely applied in circuits without altering the classical simulability of the computational output, we call a product of MGs, i.e. a Gaussian unitary operator, a {\em free} or {\em resourceless} operation, and introduce a {\em resourceful} gate as one which leads to a universal gate set if used in conjunction with free operations. Here quantum computational universality may occur in an encoded sense, e.g. as in \cite{JoMi08} where it is shown that n.n. $SWAP$ is a resourceful gate.

We call two states $\ket{\phi_1}$ and $\ket{\phi_2}$ {\em MG--equivalent} if there exists a free operation which transforms $\ket{\phi_1}\ket{b_1}$ into $\ket{\phi_2}\ket{b_2}$ where $\ket{b_1},\ket{b_2}$ are computational basis states. Clearly MG-equivalence is indeed an equivalence relation. Free (i.e. Gaussian) operations are always even operators and the inclusion of computational basis state ancillas here (which we can think of as ``free'' states) allows our notion of MG-equivalence to usefully extend to apply between some even and odd fermionic states e.g. $\ket{\phi_1}= \ket{00}+\ket{11}$ and $\ket{\phi_2}=\ket{01}+\ket{10}$ being MG-equivalent via use of ancillas $\ket{b_1}=\ket{0}, \ket{b_2}=\ket{1}$ and the free operation $G(X,X)$ applied on the last two qubits.

{\it Magic states}: The notion of magic state has been introduced in the context of Clifford circuits \cite{BrKi05} which then comprise the free operations. Adding the $T$ gate makes the Clifford gate set universal. However, instead of enlarging the gate set, one can alternatively consider allowing more general input states, so-called magic states, and adaptive measurements. In this way one can realize a $T$ gate via the so-called $T$-gadget \cite{BrKi05}, that consumes one copy of the magic state $\ket{T} = 1/\sqrt{2} \left(\ket{0} + e^{i\pi/4} \ket{1}\right)$ and uses one adaptive measurement in the computational basis together with only Clifford gates. Let us stress here that---considering computations with a single output mesaurement---neither copies of the magic state, nor adaptive measurements in the computational basis by themselves give rise to universal quantum computation (assuming quantum is more powerful than classical computing). Indeed these situations are classically efficiently simulatable \cite{Go99, AaGo04, ClJo08, Va10, JoVa14}. However, the combination of these two ingredients is resourceful. As we will be concerned with adaptive measurements in the computational basis only, we will from now on simply call them adaptive measurements.

In generality, we introduce the following natural definition:  if $R$ is a resourceful $k$-qubit gate for a set of free operations, we say that an $m$-qubit state $\ket{M}$ is a {\em magic state} for $R$ if\\
 (M1): there is a circuit $C$ of free gates and adaptive measurements such that for any $k$-qubit state $\ket{\alpha}$, $C$ maps $\ket{\alpha}\ket{M}$ to $(R\ket{\alpha})\ket{\tilde{M}}$ (where $\ket{\tilde{M}}$ is any state, that may depend on the intermediate measurement outcomes too. However, it may not depend on $\ket{\alpha}$.)\\
 Actually we will need a slightly more general version of (M1) as follows. We will require that for any $\epsilon>0$,   $C$  (of circuit size $O (poly(1/\epsilon))$) acting on $\ket{\alpha}$ together with $O(poly(1/\epsilon))$ copies of $\ket{M}$,  produces $R\ket{\alpha}$ with probability $1-\epsilon$ (where the probabilistic randomness here arises from the intermediate measurement outcomes.)  For bounded error computations with input size $n$ we take $\epsilon = 1/poly(n)$, which then maintains efficiency of the computation.

Note that there are key differences that will require special care when considering MG computation compared to Clifford computation, necessitating a further condition (M2) as below. For Cliffords, $SWAP$ is a free operation, so any free (multi-qubit) gate can be placed to act on any (distant) lines, and magic states can be freely moved to any position amongst the qubit lines when required or else placed there at the outset in the initial input state, all without affecting any action of free gates. But for MGCs none of these features hold!  - MGs can act on n.n. lines only and $SWAP$ is not a free gate so states cannot generally be freely moved around amongst the qubit lines. In particular magic states cannot be freely moved into the positions needed for their use in implementing a resourceful gate; and nor can they be placed in their needed positions from the start, as this would partition the circuit lines into sectors that must then remain independent under processing by n.n. gates, at least until the magic state has been suitably disposed of. In view of these features, in addition to (M1) we impose a second condition (M2) on a state $\ket{M}$ for it to be a magic state:\\
(M2): The state $\ket{M}$ can be swapped through arbitrary states via use of free gates only.

Thus as for Clifford circuits, magic states can be prepared prior to the computation and can then be used whenever and wherever needed.

Note that (M2) trivially holds in the Clifford scenario, but for MGCs it is a non-trivial condition and it is not even a priori clear that there exist {\em any} states satisfying both (M1) and (M2). However, an example of such a state has been identified in \cite{Br06} in the context of topological quantum computation.
 Below we will show that (M2) requires $\ket{M}$ to be a fermionic state (i.e. have definite parity) and our main result is that, in fact, {\em every} fermionic state that is not Gaussian is a magic state for MG computations. Note also that in (M2) we require swapping through arbitrary states rather than just fermionic ones, as we will generally wish to move $\ket{M}$ across some but not all, lines of an ambient fermionic state and that subset of qubit lines will not generally have an associated definite parity for the superposition components.

{\it Magic states for MG computations}: To begin our characterisation, we note that magic states for matchgate circuits cannot be single qubit states nor products of single qubit states. This follows from the fact that matchgate circuits remain classically simulatable even if arbitrary product input states and adaptive measurements in the computational basis are allowed \cite{Br16}. In this vein it is instructive to consider the state $\ket{+}=1/\sqrt{2} (\ket{0}+\ket{1})$ which can be used to implement the resourceful Hadamard gate $H$ using only matchgates \cite{Br16}. 
 Hence, one might conclude that supplementing a matchgate circuit with copies of $\ket{+}$ allows universal quantum computation. However, the reason why $\ket{+}$ in fact does not give universal power is that it fails to satisfy (M2), and is debilitatingly subject to the issues in our discussion above, motivating the introduction of (M2) \cite{footnote1}.

\begin{lemma}
\label{obs:swapping}
A multi-qubit state $\ket{\psi}$ on adjacent lines can be swapped through a neighboring line in an arbitrary state using free gates iff it is fermionic.
\end{lemma}
The proof (including specification of allowable swapping processes) is given in the Supplemental Material \cite{supp}. Note that the fact that any magic state must be fermionic then implies that the resourceful gate which can be implemented utilizing this magic state must be fermionic too, i.e. map fermionic states to fermionic states. This follows from the fact that if we start with a fermionic state and apply (i) any Gaussian unitary (i.e. any sequence of MGs) and (ii) any subsequent measurement in the computational basis, then the result is a fermionic state too. Note further that any Gaussian state can be generated via a MGC so it can never be a magic state. Thus, any magic state must be a fermionic state which is non--Gaussian. The main result of this paper is to show that in fact {\em any} fermionic state which is non--Gaussian is a magic state for MG computations (see Theorem 1). 

Before presenting our main result we give a simple explicit example of a 4-qubit magic state, thus also showing that (M1) and (M2) are not mutually inconsistent. Note that the smallest number of qubits that can be used is 4, as all 2- and 3-qubit fermionic states are Gaussian~\cite{Br05}.
The state $\ket{M} = \ket{\phi^+}_{13}\ket{\phi^+}_{24} = 1/2 \left(\ket{0000}+\ket{0101}+\ket{1010}+\ket{1111}\right)_{1234}$ is a magic state. For the qubit ordering and partition $14|23$ it is the Choi state corresponding to the $SWAP$ gate. Hence one copy of $\ket{M}$ can deterministically implement the resourceful $SWAP$ gate using MGs and adaptive measurements by gate teleportation \cite{GoCh99}, as shown in Figure~\ref{fig:magic4}. After teleporting the input with Bell measurements, the swapped input states are available on lines 2 and 3 up to local Pauli corrections, which are corrected right after the measurements. For $Z$ corrections we use the matchgate $Z\otimes I = G(Z,Z)$. For $X$ corrections we first use $G(Z,X)$'s to move in a $\ket{0}$ ancilla, then apply $G(X,X)= X\otimes X$ to it and the line, and finally use $G(-Z,X)$'s to remove the ancilla, now in state $\ket{1}$
\cite{footnote2}. 
Note further that the required Bell measurements can be realized by two local computational basis measurements preceded by the MG $G(H,H)$ as also shown in Figure~\ref{fig:magic4}.  The four qubit lines in computational basis states resulting from the Bell measurements are also moved out using $G(\pm Z,X)$'s (not shown in the Figure).
$\ket{M}$ hence fulfills (M1) for $R=SWAP$. Moreover, $\ket{M}$ is a fermionic state and hence (by Lemma~\ref{obs:swapping}) fulfills  (M2) (see Figure \ref{fig:main}). As in the case of Clifford circuits, considering circuits with single output measurements, neither the magic state itself (and copies thereof), nor the adaptive measurement alone results in a circuit which is no longer classically simulatable \cite{He19b}; it is only the combination of both that makes them resourceful.

\begin{figure}[ht]
    \centering
    \includegraphics[width=1.0\linewidth]{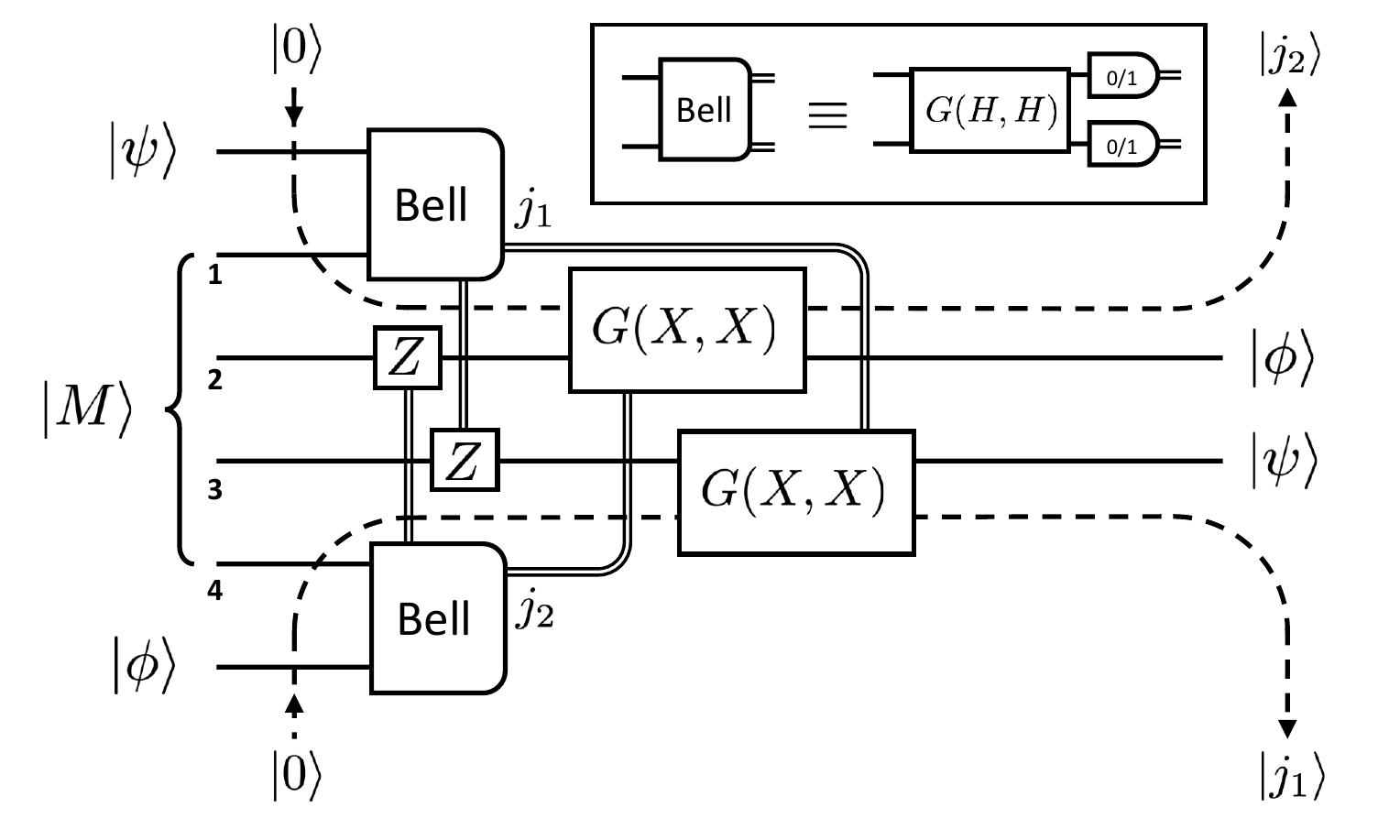}
    \caption{The $SWAP$-gadget deterministically implements a $SWAP$ gate by gate teleportation over the magic state $\ket{M} = \ket{\phi^+}_{13}\ket{\phi^+}_{24}$ as described in the text. Bell measurements can be implemented by $G(H,H)$ followed by single qubit measurements in the computational basis.}
\label{fig:magic4}
\end{figure}

\begin{figure}
  \centering
    \includegraphics[width=0.8\linewidth]{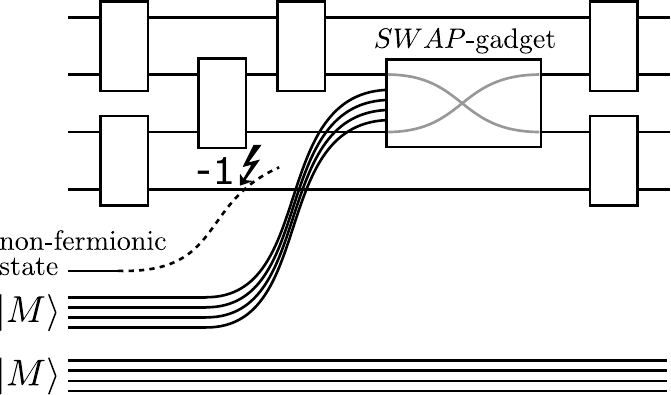}
  \caption{ Non-fermionic states do not satisfy (M2). Swapping them through other lines with e.g. $fSWAP$, a relative phase (-1) is picked up. The state $\ket{M}$ satisfies (M2), thus it can be swapped through arbitrary states via free operations. It moreover satisfies (M1), i.e., it allows implementation of a resourceful operation, here the $SWAP$-gate. As a main result of this article we show that actually all non-Gaussian fermionic states can be swapped through arbitrary states via free operations and moreover allow the implementation of resourceful gates.}
  \label{fig:main}
\end{figure}

Clearly any state which is MG--equivalent to $\ket{M}$ is a magic state too. A particularly useful MG--equivalent state is the $\ket{GHZ_4}$ state. It has the property that the first (last) two qubits can be fermionically swapped to arbitrary positions in the circuits, respectively. Hence, a $SWAP$ gate between distant qubits can be realized by consuming only a single copy of the magic state, instead of ${\cal O}(n)$ copies of the state $\ket{M}$. In these examples the resourceful gate is implemented deterministically consuming one copy of the magic state. Relaxing this condition slightly, as in our statement of (M1), leads to the following lemma, which is proven in the Supplemental Material \cite{supp}.

\begin{lemma}
\label{lemma:4qubitmagicfamily}
$\ket{\psi_{\phi}} = 1/2 (\ket{0000} + \ket{0011} + \ket{1100} + e^{i \phi} \ket{1111})$ is a magic state for matchgate circuits for all $\phi \in (0, 2\pi)$. The resourceful controlled-$\phi$-phase gate can be realized with arbitrary high success probability $1-\epsilon$ ($\epsilon >0$) by consuming ${\cal O}(poly(1/\epsilon))$ copies of the state $\ket{\psi_{\phi}}$.
\end{lemma}

Next we show that any entangled 4--qubit fermionic state which is non--Gaussian is MG--equivalent to $\ket{\psi_{\phi}}$ for some $\phi \in (0, 2\pi)$. Hence any entangled 4--qubit fermionic state which is non--Gaussian is magic. We show this by constructing an explicit MGC of depth three, which transforms any given 4--qubit fermionic state into a state of the form  $\ket{\psi_{\phi}}$. The construction is given in the Supplemental Material \cite{supp}. 

Finally we use the above result to prove the main result of the paper:

\begin{theorem}
\label{Th_allfermi}
Any pure fermionic state which is non--Gaussian is a magic state for matchgate computations.
\end{theorem}

Recall that a Gaussian state can be generated from a computational basis state via a MGC, so it can never be a magic state. Furthermore, we have seen already that (M2) implies that any magic state must be fermionic. Hence Theorem~\ref{Th_allfermi} shows that the largest set of possible states is indeed magic. We outline the idea of the proof of Theorem~\ref{Th_allfermi} here, and full details are given in the Supplemental Material \cite{supp}.
The proof is by induction. As shown above, any $k$-qubit fermionic state for $k=4$ which is non--Gaussian is a magic state. Assume that the statement holds for some $k\geq 4$ and let $\ket{\Psi}$ be a $(k+1)$-qubit fermionic, non-Gaussian state. We show that any such state can be transformed via Gaussian unitaries and measurements into a $k$--partite fermionic, non--Gaussian state. This can be seen as follows. Let $P_b^j$ for $b\in \{0,1\}$ denote the projector onto the state $\ket{b}$ of qubit $j$.
As $\ket{\Psi}$ is non--Gaussian, it holds that $\Lambda_{k+1} \ket{\Psi}^{\otimes 2}\neq 0$. It can then be shown that for any such state there exists a qubit $j$ and a bit value $b\in \{0,1\}$ such that either $P_b^j \ket{\Psi}=\ket{b}\ket{\Psi_{k}}$ is non-Gaussian, or else one of $P_b^{j} G_{j,j+1}(H,H) \ket{\Psi}=\ket{b}\ket{\Phi_{k}}$ or $P_b^{j+1} G_{j,j+1}(H,H) \ket{\Psi}=\ket{b}\ket{\Phi_{k}}$ is non--Gaussian. Hence by choosing the correct option, we obtain a $k$-qubit fermionic, non--Gaussian state i.e. a $k$-qubit magic state, from $\ket{\Psi}$, which proves the statement. It should be noted here that the $k$--partite magic state, obtained as a choice of post-measurement state, is not obtained deterministically. However, the success probability is independent of the width of the circuit in which it is to be subsequently used to implement a resourceful gate.

In summary, we have shown that the set of magic states for MG computation coincides with the set of fermionic non--Gaussian states. Our work raises a natural question about whether one can use this result in the setting where MG circuits can be compressed into exponentially smaller quantum computations~\cite{JoKr10}. For practical reasons it will be necessary to study the usefulness of mixed states as magic states. Whereas certain mixed states remain classically efficiently simulatable \cite{OsGu14}, magic state distillation will be subject of future research~\cite{Br06}. Another interesting direction is to study whether the decomposition of a universal quantum computation into classically simulatable parts and gadgets using magic states might lead to the development of novel verification schemes for quantum processes, much in the spirit of~\cite{JoSt17}.

\begin{acknowledgments}
{\it Acknowledgements:} M.H. and B.K. acknowledge financial support from the Austrian Science Fund (FWF) grant DK-ALM: W1259-N27 and the SFB BeyondC. Furthermore, B.K. acknowledges support of the Austrian Academy of Sciences via the Innovation Fund ``Research, Science and Society''. R.J., S.S. and M.Y. acknowledge support from the QuantERA ERA-NET Cofund in Quantum Technologies implemented within the European Union's Horizon 2020 Programme (QuantAlgo project), and administered through the EPSRC grant EP/R043957/1. M.Y. is supported by the Australia Cambridge Bragg Scholarship scheme, and S.S. by the Leverhulme Early Career Fellowship scheme.
\end{acknowledgments}

\newpage
\section{Supplemental Material}

In this supplemental  material we present the proofs of Lemma~\ref{obs:swapping}, \ref{lemma:4qubitmagicfamily}, and Theorem~\ref{Th_allfermi} from the main text.

\section{Proof of Lemma 1}

We first note two useful facts:\\
(F1): $\ket{0}$ or $\ket{1}$ can be freely swapped to any position within any $k$-qubit state using $fSWAP=G(Z,X)$ or $G(-Z,X)$ respectively.\\
(F2): any two computational basis states on $n$ qubits of the same parity are MG-equivalent (with no ancillas needed) since $G(X,X)$ is $X\otimes X$ so we can change the number of $\ket{1}$'s by 0, $\pm 2$, and we can freely reorder the qubits, by (F1).

\begin{lemma}
\label{obs:swapping}
A multi-qubit state $\ket{\psi}$ on adjacent lines can be swapped through a neighboring line in an arbitrary state using free gates iff it is fermionic.
\end{lemma}
\begin{proof}
For the ``if'' part: suppose that $\ket{\psi}$  of $k$ qubits, is fermionic and let us first consider swapping $\ket{\psi}$ across a line $A$ in pure state $\alpha \ket{0} + \beta \ket{1}$. If $\ket{\psi}$ is even (i.e., supported entirely in the even subspace) then use of $k$ $fSWAP=G(Z,X)$ gates each on line $A$ and the neighboring line of $\ket{\psi}$, will achieve the desired swapping while the amplitude $\beta$ undergoes an even number of sign changes during the process (as $fSWAP$ maps $\ket{1}\ket{1}$ to $-\ket{1}\ket{1}$, $\ket{0}\ket{1}$ to $\ket{1}\ket{0}$, $\ket{1}\ket{0}$ to $\ket{0}\ket{1}$, and $\ket{0}\ket{0}$ to $\ket{0}\ket{0}$), so at the end the final state of $A$ is left unchanged. If $\ket{\psi}$ is odd, then we adjoin an ancilla qubit $\ket{1}$ to produce the even $(k+1)$-qubit fermionic state $\ket{\psi}\ket{1}$ (e.g. by (F1) we can swap $\ket{1}$ in from a fringe position using $G(-Z,X)$ gates) and then proceed as above with $k+1$ $G(Z,X)$ gates. Finally using $G(-Z,X)$'s again, we remove the ancilla $\ket{1}$ back out to a fringe position. For the general case of line $A$ possibly being entangled with other lines, the situation is as above except that now the amplitudes $\alpha$ and $\beta$ are vectors in another Hilbert space, and the process still proceeds just as above, now with the vector $\ket{\beta}$ changing sign an even number of times.

Let us now prove the ``only if'' part: suppose that $\ket{\psi}$ is any $k$ qubit state.
We will see that it suffices to consider the case that the line $A$ over which it is to be swapped, is in a pure state, as this will already require that $\ket{\psi}$ be fermionic. Let us denote the 1-qubit pure state of $A$ by $\ket{A}$.
Note that the operation that accomplishes swapping of $\ket{\psi}$ with $\ket{A}$ need not necessarily be a sequence of $G(Z,X)$'s and $G(-Z,X)$'s  but can be a most general Gaussian unitary $U$, which may depend on $\ket{\psi}$ but not on $\ket{A}$; indeed $U$ must achieve swapping for all choices of $\ket{A}$. Moreover we may also have availability of an ancilla $\ket{b}$ like $\ket{1}$ above for the odd fermionic case. It must be able to be freely moved in so for example, by (F1), $\ket{b}$ could be a computational basis state. However our argument below applies to $\ket{b}$ being any state and we will not need to impose the condition that it can be freely moved.

Let us now assume such a Gaussian unitary $U$ exists. Then it must hold that
\begin{align}\label{A1}
U \ket{A} \ket{\psi}\ket{b} =  \ket{\psi}\ket{A} \ket{\xi} \hspace{3mm} \mbox{for all $\ket{A}$,}
\end{align}
where $\ket{b}$ is an ancilla state of say $m$ qubits, and $\ket{\xi}$ is the final state of these ancilla qubits. Both $\ket{b}$ and $\ket{\xi}$ may depend on $\ket{\psi}$ but not on $\ket{A}$.
It will also be necessary for $\ket{\xi}$ to be able to be moved out without disturbing the ambient state, but again, even without imposing this condition we will see that the above process will already require $\ket{\psi}$ to be fermionic.

To start, introduce
\begin{align} \label{ups0}
\ket{\Upsilon_0}&=\ket{0}\ket{\psi}\ket{b} \text{ and} \\
\label{ups1}
\ket{\Upsilon_1}&=\ket{1}\ket{\psi}\ket{b}.
\end{align}

Then eq. (\ref{A1}) is equivalent to 

\begin{align} \label{A2}
U \ket{\Upsilon_0} &=  \ket{\psi}\ket{0}\ket{\xi} \text{ and} \\
\label{A3} U\ket{\Upsilon_1} &= \ket{\psi}\ket{1}\ket{\xi}.
\end{align}
As $U$ is Gaussian, it must hold that $[\Lambda_{1+k+m}, U \otimes U] = 0$, which is equivalent to
\begin{align}
\sum_{i=1}^{2(1+k+m)}  (U^\dagger \otimes U^\dagger) (c_i \otimes c_i) (U \otimes U)= \sum_{i=1}^{2(1+k+m)} c_i \otimes c_i.
\end{align}


Recall that the JW operators for $1+k+m$ qubits have the product Pauli form $c_l=Z \otimes \ldots \otimes Z \otimes P \otimes \identity \otimes \ldots \otimes \identity$ where $P$ in position $j$ (for $j=1,\ldots ,1+k+m$) is either $X$ (for $l=2j-1$) or $Y$ (for $l=2j$), and crucially here, $Z$ and $I$ have zero off-diagonal entries while $X$ and $Y$ have non-zero entries.

Let us now apply each side of eq. (\theequation) to $\ket{\Upsilon_0}\otimes \ket{\Upsilon_1}$ and take inner product with $\ket{\Upsilon_1}\otimes \ket{\Upsilon_0}$ (i.e. apply the corresponding bra vector from the left to eq. (\theequation)). Note that eqs. (\ref{A2}, \ref{A3})  are then relevant for evaluating the LHS of eq. (\theequation). Because of the off-diagonal properties above and positions of $Z$ and $I$ Pauli's in the JW operators, in the sum on the RHS of Eq. (\theequation),  only terms with $i \in \{1,2\}$ survive, while on the LHS only  terms with $i \in \{2k+1,2k+2\}$ survive, and we get
\begin{align}
 2 \bra{\psi} Z^{\otimes k} \ket{\psi}^2\braket{b}{b}^2 = 2 \braket{\psi}{\psi}^2\braket{\xi}{\xi}^2,
\end{align}
so that  $\bra{\psi} Z^{\otimes k} \ket{\psi} = \pm 1$ i.e. $\ket{\psi}$ must be fermionic.

\end{proof}

\section{Proof of Theorem 1}

 We first show that any state $\ket{\psi_{\phi}} = 1/2 (\ket{0000} + \ket{0011} + \ket{1100} + e^{i \phi} \ket{1111})$ with $\phi \in (0, 2\pi)$ is a magic state (cf Lemma \ref{lemma:4qubitmagicfamily} in the main text). We then show Lemma \ref{lemma:equivalenc4qubits}, which states that any fermionic non--Gaussian 4-qubit state is MG--equivalent to one of the above mentioned states. Finally we prove the main result of the paper, by showing that any fermionic non--Gaussian state can be projected via free operations (i.e. MGs and measurements in the computational basis) onto a fermionic non--Gaussian 4--partite state. 
This projection will be achieved as a post-measurement state of a measurement with success probability being independent of the width of the ambient circuit in which it is applied.

\begin{lemma}
\label{lemma:4qubitmagicfamily}
$\ket{\psi_{\phi}} = 1/2 (\ket{0000} + \ket{0011} + \ket{1100} + e^{i \phi} \ket{1111})$ is a magic state for matchgate circuits for all $\phi \in (0, 2\pi)$. The resourceful controlled-$\phi$-phase gate can be realized with arbitrary high success probability $1-\epsilon$ ($\epsilon >0$) by consuming ${\cal O}(poly(1/\epsilon))$ copies of the state $\ket{\psi_{\phi}}$.
\end{lemma}
\begin{proof}
The state $\ket{\psi_{\phi}}$ is fermionic and thus satisfies (M2) according to Lemma \ref{obs:swapping}. Let us now show that the state moreover satisfies condition (M1). 
To this end, note that a controlled-$\phi$-phase gate $C_\phi$ is indeed resourceful for $\phi \in (0, 2\pi)$ in the sense that together with nearest-neighbor matchgates it forms a universal gate set \cite{BrGa11, OsZi17}.
We will show that $\ket{\psi_{\phi}}$  satisfies (M1) in two steps, reminiscent of an idea used in \cite{CiDu01}. First, we will show that one copy of $\ket{\psi_{\phi}}$ allows to probabilistically implement either $C_\phi$, or $C_{-\phi}$, with respective probabilities of $1/2$, at any desired place in the circuit.
In a second step we show that for all $\epsilon > 0$, $O({\rm poly}(1/\epsilon))$ many copies of $\ket{\psi_{\phi}}$ suffice in order to implement the desired phase gate with a probability of success of at least $1-\epsilon$, as required in (M1). 

Utilizing the same gadget as in Figure~1 in the main text inputting $\ket{\psi_{\phi}}$ instead of $\ket{M}$, one successfully implements $C_\phi$ in case the outcomes of both Bell measurements are $\ket{\phi^+}$, as $\ket{\psi_{\phi}}$ is the Choi–Jamiołkowski state corresponding to $C_\phi$. Although the gate in question is not Clifford unless $\phi=\pi$, let us nevertheless implement the Pauli corrections as in Figure~1. While $Z$ commutes with $C_\phi$, $(X^i \otimes X^j) C_\phi (X^i \otimes X^j)$ permutes the diagonal entries of $C_\phi$. Nevertheless, local $\pm \phi$-phase gates (extended by the identity) are matchgates, and can be adaptively applied in order to successfully implement $C_\phi$ in case an even number of $X$ corrections was necessary. Otherwise the gate $C_{-\phi}$ was realized. All measurement outcomes are equally likely, hence we have implemented $C_\phi$ or $C_{-\phi}$ with probabilities of $1/2$, respectively.

As shown in \cite{CiDu01}, this allows to make the gate implementation successful with an arbitrarily high probability. Indeed, in case $C_{-\phi}$ has been implemented instead of $C_{\phi}$, the gate implementation can be repeated using  $\ket{\psi_{2 \phi}}$. With probability $1/2$, overall the gate $C_\phi$ is implemented. In case of failure, $C_{-3\phi}$ is implemented. Iterating this process $L$ times, $C_\phi$ is implemented with a probability $1-1/2^L$. We consider now a protocol which uses up to $L$ iterations and halts. This procedure, however, not only requires to (probabilistically) implement $C_\phi$, but also controlled-phase gates for multiples of the desired phase $C_{2^k \phi}$ for $k \in \mathbb{N}$. The required states $\ket{\psi_{2^k \phi}}$ can be probabilistically created using copies of $\ket{\psi_{\phi}}$ as explained in the following. Iteratively using a copy of $\ket{\psi_{2^{j-1} \phi}}$ in order to implement $C_{2^{j-1}\phi}$ on another copy of $\ket{\psi_{2^{j-1} \phi}}$ yields $\ket{\psi_{2^{j} \phi}}$ with probability $1/2$. Note that here, in case of failure, there is no necessity to correct the failure, instead one can simply discard the failure states and try again with additional copies of $\ket{\psi_{\phi}}$. As the procedure of generating  $\ket{\psi_{2^{k} \phi}}$ is probabilistic itself, the overall strategy of implementing  $C_\phi$ has two possible sources of failure. First, one might fail in implementing $C_{\phi}$  in any of the $L$ rounds and, second, one might be unlucky in generating the states $\ket{\psi_{2^k\phi}}$  leading to a resource consumption that is too high. 

Let us now, however,  show that the procedure above implements $C_{\phi}$ with a sufficiently large probability and a sufficiently small number of resource states $\ket{\psi_{\phi}}$. To this end, we now analyze the average resource consumption of the process. Let $M$ denote the random variable of consumed copies of $\ket{\psi_{\phi}}$ in the described overall process of implementing $C_{\phi}$ and let $R$ denote the random variable counting the required rounds. Then
\begin{align}
E(M) &= \sum_{m=0}^\infty m \ p(m) = \sum_{m=0}^{\infty} m \sum_{r=1}^L p(r) p(m|r) \\
 &=  \sum_{r=1}^L  p(r)  \underbrace{\sum_{m=0}^{\infty} m \  p(m|r)}_{E_r(M)},
\end{align}
where $E(M)$ denotes the expectation value of $M$, $p(m|r)$ is the conditional probability of requiring $m$ copies of the magic state given $r$ rounds are required, and $E_r(M)$ is the conditional expectation value of $M$ given $r$ rounds are required. The average cost of generating a copy of $\ket{\psi_{2^k\phi}}$ is $4^k$ copies of $\ket{\psi_{\phi}}$. Hence, $E_r(M)=\sum_{k=1}^r 4^{k-1}$. Moreover, $p(r) = 
\begin{cases}
    1/2^r,& \text{if } 1 \leq r < L\\
    1/2^{L-1},              & \text{if } r=L
\end{cases}$. With this, a short calculation shows that $E(M) = 2^L -1$.

Let us now choose $L = \log 2/\epsilon$ leading to a probability of $1 - \epsilon/2$ of not requiring more than $L$ rounds. The expected resource consumption is then $E(M) = 2/\epsilon -1 \approx 2/\epsilon$. Markov's inequality, which gives a bound on the probability that a positive random variable takes values larger than a multiple of its expectation value, can be invoked to see that a supply of $4/\epsilon^2$ copies of $\ket{\psi_{\phi}}$ suffices with a probability of at least $1 - \epsilon/2$.  Hence, the probability of successfully implementing $C_\phi$ consuming no more than $4/\epsilon^2$ copies of $\ket{\psi_{\phi}}$ is at least $1-\epsilon$. This shows that $\ket{\psi_{\phi}}$ satisfies (M1) and completes the proof.
\end{proof}

We remark that a proof of an approximate version of Lemma \ref{lemma:4qubitmagicfamily} can be obtained from the theory of random walk hitting times. The process in the second paragraph of the proof above (implementing either $C_\phi$ or $C_{-\phi}$ with probabilities half) if simply iterated, amounts to a random walk on the circle of $\phi$ values, stepping left or right by angle $\phi$ in each step while consuming a single copy of $\ket{\psi_\phi}$. Then one may argue \cite{alpha} that for any $\epsilon >0$, ${\mathcal{O}}(\operatorname{poly} (1/\epsilon))$ steps will suffice to hit a value $\tilde{\phi}\in (\phi-\epsilon, \phi+\epsilon)$ with probability $1-\epsilon$, thereby resulting in implementation of a gate $C_{\tilde{\phi}}$ with $||C_{\tilde{\phi}} - C_{\phi}||<\epsilon$.

Let us remark here, that the magic states that we have considered before, in particular $\ket{M}$, are equivalent to $\ket{\psi_{\phi}}$ with $\phi = \pi$ up to transformation by matchgates. Together with the following lemma, Lemma \ref{lemma:4qubitmagicfamily} shows that every non-Gaussian, fermionic four-qubit state serves as a magic state in matchgate circuits.

\begin{lemma}
\label{lemma:equivalenc4qubits}
Any fermionic four-qubit state $\ket{\psi}$ is MG-equivalent to $\ket{\psi_{\phi}} = 1/2 (\ket{0000} + \ket{0011} + \ket{1100} + e^{i \phi} \ket{1111})$ for some $\phi \in [0, 2\pi]$.
\end{lemma}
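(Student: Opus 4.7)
My plan is to argue by an explicit short matchgate circuit that sends any fermionic four-qubit $\ket{\psi}$ to the canonical form $\ket{\psi_\phi}$, possibly together with a computational-basis ancilla. First I would reduce to the even-parity case: matchgates preserve total parity and $\ket{\psi_\phi}$ is even, so if $\ket{\psi}$ is odd I would adjoin a $\ket{1}$-ancilla (moved into position via $fSWAP$'s using fact (F1) from the proof of Lemma~\ref{obs:swapping}) to form a five-qubit even state and then look for a matchgate reduction to $\ket{\psi_\phi}\ket{0}$, which by definition exhibits the MG-equivalence $\ket{\psi}\sim\ket{\psi_\phi}$. Henceforth I assume $\ket{\psi}$ is even.

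The structural reason the lemma should hold is that the eight-dimensional even sector of four qubits carries the half-spin representation of $\mathrm{Spin}(8)$, and the four-qubit matchgate group acts on it through precisely this representation. A classical fact about this representation (which underlies the $\Lambda_n$-isotropy characterisation of Gaussian states cited from~\cite{Br05Lagrange}) is that Gaussian states coincide with the pure spinors, and that the complement of the pure-spinor quadric inside the projective half-spin space is a single open orbit under the complexified group. Passing to the compact matchgate group foliates this open orbit into a one-real-parameter family of orbits, and I would identify this parameter with the $\phi$ of $\ket{\psi_\phi}$: Gaussian states correspond to $\phi=0$ and non-Gaussian states to $\phi\in(0,2\pi)$.

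To obtain the explicit depth-three construction promised in the text I would build the circuit in three layers. Layer~1 applies matchgates on the pairs $\{1,2\}$ and $\{3,4\}$ to put the Schmidt decomposition of $\ket{\psi}$ across the cut $\{1,2\}|\{3,4\}$ into a standard form; because $\ket{\psi}$ is even, the two reduced density matrices are block-diagonal in parity, so the diagonalising unitaries needed on each half are parity-preserving two-qubit gates. Layer~2 is a single matchgate on the central pair $\{2,3\}$ --- concretely an $fSWAP$ or a $G(H,H)$ --- that couples the two halves and concentrates the amplitude onto the four basis vectors $\ket{0000}, \ket{0011}, \ket{1100}, \ket{1111}$. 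Layer~3 is a final pass of single-qubit $Z$-rotations, each realised as a matchgate, that equalises the three remaining positive amplitudes and collects every residual phase into a single $e^{i\phi}$ on $\ket{1111}$.

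The main obstacle is Layer~1. A general parity-preserving two-qubit unitary has the form $A\oplus B$ with $A,B\in U(2)$ \emph{unconstrained}, whereas a matchgate requires $\det A=\det B$, so the Schmidt-diagonalising unitary on $\{1,2\}$ (and on $\{3,4\}$) is not automatically a matchgate: there is a one-parameter determinant-mismatch phase to account for. The resolution is either to push this phase onto the ancilla line introduced in the parity reduction, or to absorb it into the final parameter $\phi$. Verifying that this bookkeeping closes up --- and hence that the three-layer circuit really does produce $\ket{\psi_\phi}$ for a well-defined $\phi\in[0,2\pi]$ --- is where the real work lies; once it is in place the remaining checks are direct computation on the eight-dimensional even subspace.
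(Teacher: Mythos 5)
Your three-layer skeleton matches the paper's construction (reduce to even parity, local matchgates on $\{1,2\}$ and $\{3,4\}$, then $fSWAP_{2,3}$ to merge the two parity blocks, then a final local adjustment), but there are two genuine gaps. The first is your Layer~3: after Layers 1 and 2 the state has the form $p\ket{0000}+r'\ket{0011}+s'\ket{1100}+q'\ket{1111}$ where $p,q',r',s'$ are essentially the Schmidt coefficients of the two parity blocks of the original state, and these have no reason to be equal in magnitude. Single-qubit $Z$-rotations are diagonal and cannot ``equalise the three remaining positive amplitudes''; no diagonal gate can change $|p|$ into $1/2$. What is actually needed --- and what the paper supplies --- is a second round of \emph{general} local unitaries on the pairs $\{1,2\}$ and $\{3,4\}$ (acting on the two-dimensional spans of $\ket{00}$ and $\ket{11}$), justified by the key observation you are missing: in the $\ket{00}/\ket{11}$ encoding the Schmidt coefficients of $\ket{\psi_\phi}$ across the $12|34$ cut are $\tfrac12\pm\tfrac12\cos(\phi/2)$, which sweep out \emph{all} possible Schmidt spectra as $\phi$ ranges over $[0,2\pi]$. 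Hence there is always a $\phi$ for which $\ket{\psi_\phi}$ is local-unitarily equivalent to your post-Layer-2 state, and that equivalence is what the third layer implements. Without this identification of $\phi$ with the entanglement spectrum, your construction does not close.

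The second gap is that you flag the determinant-mismatch obstacle in Layer~1 but do not resolve it, and your two proposed resolutions (push the phase onto an ancilla, or absorb it into $\phi$) are left unverified. The clean resolution is to implement an arbitrary $A\in U(2)$ on the even block via the matchgate $G(A,e^{i\xi/2}\identity)$ with $e^{i\xi}=\det A$, which acts as the desired $A$ on $\mathrm{span}\{\ket{00},\ket{11}\}$ and as a harmless overall phase on $\mathrm{span}\{\ket{01},\ket{10}\}$ (and symmetrically $G(e^{i\eta/2}\identity,B)$ for the odd block); the residual relative phase between the two blocks is then absorbed when each block is brought to Schmidt form with non-negative coefficients. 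Two smaller points: $G(H,H)_{2,3}$ is not a valid substitute for $fSWAP_{2,3}$ in Layer~2, since it sends the odd-block basis vectors into superpositions rather than cleanly into $\mathrm{span}\{\ket{0011},\ket{1100}\}$; and the $\mathrm{Spin}(8)$/pure-spinor discussion, while a correct heuristic, does no work in the argument and cannot substitute for the explicit Schmidt-spectrum computation.
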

\begin{proof}
First note that any odd fermionic four-qubit state can be turned into an even fermionic four-qubit state by applying $G(X,X)=X\otimes X$ to one of its qubits and an ancillary $\ket{0}$ state, so we can assume that $\ket{\psi}$ is even.
Defining $\ket{e_0} = \ket{00}$,  $\ket{e_1} = \ket{11}$, $\ket{d_0} = \ket{01}$,  and $\ket{d_1} = \ket{10}$, an arbitrary even fermionic four-qubit state $\ket{\psi}$ can be written as $\ket{\psi_e} + \ket{\psi_d}$ for unnormalized
 $\ket{\psi_e} \in \operatorname{span}\{\ket{e_0 e_0},\ket{e_0 e_1},\ket{e_1 e_0},\ket{e_1 e_1}\}$ ($e$-subspace) and $\ket{\psi_d} \in \operatorname{span}\{\ket{d_0 d_0},\ket{d_0 d_1},\ket{d_1 d_0},\ket{d_1 d_1}\}$  ($d$-subspace). We will now prove the lemma by constructing a depth-3 matchgate circuit that transforms $\ket{\psi}$  into the desired form, $\ket{\psi_{\phi}}$, for some $\phi \in [0,2 \pi]$.

A matchgate $G(A,B)$, with $\det{A} = \det{B}$, either on lines 1 and 2, or on lines 3 and 4 applies  the unitary $A$ ($B$) to the subspace spanned by $\{\ket{e_0},\ket{e_1}\}$ ($\{\ket{d_0},\ket{d_1}\}$), respectively. If one wished to apply an arbitrary unitary $A$ on the $\{\ket{e_0},\ket{e_1}\}$ subspace, one can achieve this by applying $G(A, e^{i \xi/2}\identity)$, where $e^{i \xi} = \det{A}$, which applies the desired unitary $A$ on the $e$-subspace, and only an overall phase on the $d$-subspace. Simliarly, $G(e^{i \eta/2}\identity,B)$ can be used to implement any unitary $B$ with $e^{i \eta} = \det{B}$ on the $d$-subspace.
This allows us to bring the state $\ket{\psi}$ into the form
\begin{align}
\ket{\psi'} = p \ket{e_0 e_0} + q \ket{e_1 e_1} + r \ket{d_0 d_0} + s \ket{d_1 d_1}
\end{align}
via matchgates $G(A_{12},e^{i \xi_{12}/2}\identity) \otimes G(A_{34},e^{i \xi_{34}/2}\identity)$, followed by $G(e^{i \eta_{12}/2},B_{12}) \otimes G(e^{i \eta_{34}/2},B_{34})$  that bring the $e$-subspace and the $d$-subspace independently into Schmidt-form. Note also that this is a depth-1 circuit.

Moreover, as $fSWAP_{2,3} $ maps $\ket{d_0d_0}= \ket{0101}$ to $\ket{0011}=\ket{e_0e_1}$, $\ket{d_1d_1}$ to $\ket{e_1e_0}$ and $\ket{e_je_j}$ to $(-1)^j\ket{e_je_j}$, we see that applying $fSWAP_{2,3}$ brings the state $\ket{\psi'}$ into the form
\begin{align}
\ket{\psi''} = p \ket{e_0 e_0} + q' \ket{e_0 e_1} + r' \ket{e_1 e_0} + s' \ket{e_1 e_1},
\end{align}
which is supported on the $e$-subspace solely, as is the target state $\ket{\psi_{\phi}}$.

Note that in the $e_0/e_1$-encoding, the Schmidt coefficients of the state $\ket{\psi_{\phi}}$ are $\lambda_{1,2} = \frac{1}{2} \pm \frac{1}{2} \cos(\phi/2)$. Thus, varying the angle $\phi$, all possible Schmidt values are covered. Hence, for all possible $\ket{\psi''}$, there exists a $\ket{\psi_{\phi}}$ having the same Schmidt values. We now use the fact, that two bipartite states are interconvertible into each other by local unitaries iff they have the same Schmidt coefficients. It implies that there always exists a pair of matchgates $G(A'_{12},e^{i \xi'_{12}/2}\identity) \otimes G(A'_{34},e^{i \xi'_{34}/2}\identity)$ which brings $\ket{\psi''}$ into the form $\ket{\psi_{\phi}}$ for some $\phi \in [0, 2 \pi]$ depending on the Schmidt values of $\ket{\psi''}$.

Concatenating the three steps, we see that any even fermionic state $\ket{\psi}$ can be transformed into some $\ket{\psi_{\phi}}$ by a depth-3 matchgate circuit, completing the proof of the lemma.
\end{proof}

Let us remark here that $\ket{\psi_{\phi}}$ is Gaussian iff $\phi = 0$ and the above lemmas have established that all non-Gaussian fermionic 4-qubit states are magic state for matchgate computations. Finally in the following, we show that actually all non-Gaussian fermionic $k$-qubit states are magic states for any $k\geq 4$.

\begin{theorem}
\label{Th_allfermi}
Any pure fermionic state which is non--Gaussian is a magic state for matchgate computations.
\end{theorem}
\begin{proof}
Having established that every fermionic non-Gaussian 4-qubit state is a magic state, we will prove the theorem inductively by showing that any non-Gaussian fermionic $k$-qubit state $\ket{\psi}$ with $k\geq 5$ can be transformed into a fermionic non-Gaussian $(k-1)$-qubit state by matchgates and adaptive measurements in the computational basis, which then (by reducing $k$ to 4) will imply that $\ket{\psi}$ is a magic state.
 Our procedure for obtaining a non-Gaussian state on fewer qubits  will be probabilistic. However, as the considered $k$-qubit states $\ket{\psi}$ do not depend on the circuit size $n$, the probability of success can be made arbitrarily close to 1 by repeating the procedure a number of times that does not depend on $n$. 
 
 To this end we make use of the operator $\Lambda_k = \sum_{i=1}^{2k} c_i \otimes c_i$ defined on two copies of the original Hilbert space. As mentioned above, an operator $X$ on $k$ qubits is Gaussian iff $\left[  \Lambda_k, X^{\otimes 2}  \right] = 0$ \cite{Br05Lagrange}. Moreover, a fermionic $k$-qubit state $\ket{\phi}$ is Gaussian iff $\Lambda_k \ket{\phi}^{\otimes 2} = 0$ \cite{SpSc18}.
Note that in particular, $\Lambda_k \ket{\psi}^{\otimes 2} \neq 0$, as $\ket{\psi}$ is non-Gaussian, $\left[  \Lambda, U_{\text{MG}}^{\otimes 2}  \right] = 0$, where $U_{\text{MG}}$ is any nearest-neighbor matchgate acting on parties $i$ and $i+1$ on both copies, and $\left[  \Lambda, \left(P_b^{(i)} \right)^{\otimes 2}  \right] = 0$, where $P_b^{(i)} = \identity \otimes \ldots \otimes \identity \otimes \ket{b}\bra{b} \otimes \identity \otimes \ldots \otimes \identity$, $b \in \{0,1\}$, and the projector onto $\ket{b}$ is acting on party $i \in \{1, \ldots, k\}$.

Abbreviating
\begin{align}
\label{eq:lambdaeq}
\ket{v} = \Lambda_k \ket{\psi}^{\otimes 2},
\end{align}
we distinguish now the following two cases. In case 1, there exists some $j \in \{1, \ldots, k\}$ and $b \in \{0,1\}$ such that  $\left(P_b^{(j)} \right)^{\otimes 2} \ket{v} \neq 0$, while in case 2, $\left(P_b^{(j)} \right)^{\otimes 2} \ket{v} = 0$ for all $j \in \{1, \ldots, k\}$ and $b \in \{0,1\}$. We will now show that it is possible to transform $\ket{\psi}$ into a non-Gaussian $(k-1)$-qubit state in both cases.

In case 1, we have that $\left(P_b^{(j)} \right)^{\otimes 2} \ket{v} = \Lambda_k  \left(P_b^{(j)} \right)^{\otimes 2} \ket{\psi}^{\otimes 2}  \neq 0$, for some $j$ and $b$. Hence, there exists some $j$ such that when measuring qubit $j$ in the computational basis, at least one of the two possible outcomes, outcome $b$, yields a non-Gaussian state (with non-vanishing probability). Qubit $j$ of that state separates from the remaining qubits and is in state $\ket{b}$. As such, we  can discard (swap away) line $j$ in order to successfully obtain a non-Gaussian state on $k-1$ qubits.

Let us now discuss case 2, which is a little more involved. In this case, $\ket{v}$ takes the form
\begin{align}
\ket{v} = \sum_{i \in \{0,1\}^k} \lambda_{i} \ket{i}\ket{\neg i},
\end{align}
with some coefficients $\lambda_{i}\in \mathbb{C}$, where $\neg i$ denotes bit-wise negation of $i$. (The state $\left|v \right>$ is actually symmetric under exchanging the first $k$ with the last $k$ qubits, hence, $\lambda_{i} = \lambda_{\neg i}$ but we will not utilize that symmetry). Note that in this case it is not possible to directly measure any of the qubits in the computational basis in order to obtain a non-Gaussian state on fewer qubits. However, we will show that the task can nevertheless be achieved by transforming $\ket{\psi}$ by a single matchgate $G(H,H)$ prior to measuring one of the qubits.
Let us distinguish the following two subcases. In subcase 2a, there exists $j \in \{1, \ldots, k-1\}$, $b \in \{0,1\}$ such that either $\left(P_b^{(j)} G(H,H)_{j,j+1} \right)^{\otimes 2} \ket{v} \neq 0$ or $\left(P_b^{(j+1)} G(H,H)_{j,j+1} \right)^{\otimes 2} \ket{v} \neq 0$. In subcase 2b, $\left(P_b^{(j)} G(H,H)_{j,j+1} \right)^{\otimes 2} \ket{v} = 0$ and $\left(P_b^{(j+1)} G(H,H)_{j,j+1} \right)^{\otimes 2} \ket{v} = 0$ for all $j$ and $b$. In case 2a it is possible to obtain a non-Gaussian, fermionic state on $k-1$ qubits by applying $G(H,H)_{j,j+1}$ onto $\ket{\psi}$ and then following the same procedure as in case 1. Let us now complete the proof of the theorem by showing that case 2b cannot occur. To do so, let us first rewrite the vector $\ket{v}$ as
\begin{align}
\ket{v} = \sum_{i \in \{0,1\}^{k-2}} &\lambda_{00 i} \ket{00 i} \ket{11 \neg i} +  \lambda_{01 i} \ket{01 i} \ket{10 \neg i} \nonumber\\
 &+  \lambda_{10 i} \ket{10 i} \ket{01 \neg i} +  \lambda_{11 i} \ket{11 i} \ket{00 \neg i}.
\end{align}
As $G(H,H)$ transforms the computational basis into the Bell basis, we have
\begin{align}
G(H,H)_{1,2}^{\otimes 2} \ket{v} =  & \nonumber\\
 1/2 \sum_{i \in \{0,1\}^{k-2}} & \ket{00 i} \ket{00 \neg i} (\lambda_{00 i} + \lambda_{11 i} )   \nonumber\\
                                                     + & \ket{11 i} \ket{11 \neg i}  (-\lambda_{00 i} - \lambda_{11 i} )    \nonumber\\
                                                     + & \ket{00 i} \ket{11 \neg i}  (-\lambda_{00 i} + \lambda_{11 i} )    \nonumber\\
                                                     + & \ket{11 i} \ket{00 \neg i}  (\lambda_{00 i} - \lambda_{11 i} )    \nonumber\\
                                                     + & \ket{01 i} \ket{01 \neg i}  (\lambda_{01 i} + \lambda_{10 i} )    \nonumber\\
                                                     + & \ket{10 i} \ket{10 \neg i}  (-\lambda_{01 i} - \lambda_{10 i} )    \nonumber\\
                                                     + & \ket{01 i} \ket{10 \neg i}  (-\lambda_{01 i} + \lambda_{10 i} )    \nonumber\\
                                                     + & \ket{10 i} \ket{01 \neg i}  (\lambda_{01 i} - \lambda_{10 i} ) 
\end{align}
Due to the assumption that every $\left(P_b^{(1)}\right)^{\otimes 2}$ or $\left(P_b^{(2)} \right)^{\otimes 2}$ maps $G(H,H)_{1,2}^{\otimes 2} \ket{v}$ to zero, $\ket{v}$ must obey the equations
\begin{align}
\label{eq:ghheqs}
 \lambda_{00 i} = -  \lambda_{11 i}, \  \lambda_{01 i} =  - \lambda_{10 i} \ \forall i \in \{0,1\}^{k-2}.
\end{align}
More generally, considering $G(H,H)$ acting on qubits $j$, $j+1$, one obtains that $\ket{v}$ must obey
$\lambda_{i} = -  \lambda_{i_1, \ldots, i_{j-1}, \neg i_j, \neg i_{j+1}, i_{j+2}, \ldots, i_k}$ for all $i \in \{0,1\}^{k}$ and $j\in \{1, \ldots, k-1\}$. We will now show that such a $\ket{v}$ cannot stem from a $\ket{\psi}$ as in Eq. (\ref{eq:lambdaeq}). To do so, we make use of the fact that for operators $A$, $B$, and states $\ket{\chi}$, $\ket{\psi}$, and $\ket{\phi}$, it holds that $A \otimes B \ket{\chi}\ket{\xi} = \ket{\phi}$ iff $A \ket{\chi}\bra{\xi^*} B^T = \sum \phi_{i,j} \ket{i}\bra{j}$. Hence, Eq. (\ref{eq:lambdaeq}) is equivalent to
\begin{align}
\label{eq:lambdaeqreshaped}
\sum_{l=1}^{2k}c_l \ket{\psi}\bra{\psi^*} c_l^T=v.
\end{align}
Here, $v$ is an anti-diagonal matrix with entries $\lambda_i$ on its anti-diagonal. According to the assumption $v \neq 0$ and due to the restrictions on the $\lambda_i$ as in  Eq. (\ref{eq:ghheqs}), at least $2^{k-1}$ entries of $v$ are non-vanishing, which implies that the rank of the matrix $v$ is at least $2^{k-1}$. However, the rank of the matrix on the left hand side of Eq. (\ref{eq:lambdaeqreshaped}) is at most $2k$ leading to a contradiction if $k \geq 5$, which is the case. This proves that case 2b cannot occur. This completes the proof as in the other cases it is possible to obtain a non-Gaussian, fermionic state on $k-1$ qubits.

\end{proof}

Let us remark here, that case 2b in the proof of Theorem \ref{Th_allfermi}, which cannot occur in case $k \geq 5$, indeed occurs if $k=4$. In this case, the rank of $v$ equals 8. This is in line with the fact that all fermionic three-qubit states are Gaussian and hence it cannot be possible to transform a non-Gaussian four-qubit state to a non-Gaussian three-qubit state.

\end{document}